\def\XX{\mathbb{X}}
\newtheorem{theorem}{Theorem}
\newtheorem{statement}{Statement}
\def\XX{\mathbb{X}}
\newcommand{\ion}[2]{%
  #1$\;$%
  % Test for bold font by inspecting \f@series, it can be "b" or "bx"
  \if b\expandafter\@car\f@series\relax\@nil
    \begingroup % keep changes to scratch box register 0 local
      \sbox0{\rmfamily\mdseries\textsc{v}}%
      \resizebox{!}{\ht0}{\rmfamily\@Roman{#2}}%
    \endgroup
  \else
    \textsc{\rmfamily\@roman{#2}}%
  \fi
}
\def\@author#1{\g@addto@macro\elsauthors{\normalsize%
    \def\baselinestretch{1}%
    \upshape\authorsep#1\unskip\textsuperscript{%
      \ifx\@fnmark\@empty\else\unskip\sep\@fnmark\let\sep=,\fi
      \ifx\@corref\@empty\else\unskip\sep\@corref\let\sep=,\fi
      }%
    \def\authorsep{\unskip,\space}%
    \global\let\@fnmark\@empty
    \global\let\@corref\@empty  %% Added
    \global\let\sep\@empty}%
    \@eadauthor={#1}
}
\journal{Blockchain: Research and Applications}
\begin{document}

\begin{frontmatter}

%% Title, authors and addresses

%% use the tnoteref command within \title for footnotes;
%% use the tnotetext command for theassociated footnote;
%% use the fnref command within \author or \address for footnotes;
%% use the fntext command for theassociated footnote;
%% use the corref command within \author for corresponding author footnotes;
%% use the cortext command for theassociated footnote;
%% use the ead command for the email address,
%% and the form \ead[url] for the home page:
%% \title{Title\tnoteref{label1}}
%% \tnotetext[label1]{}
%% \author{Name\corref{cor1}\fnref{label2}}
%% \ead{email address}
%% \ead[url]{home page}
%% \fntext[label2]{}
%% \cortext[cor1]{}
%% \affiliation{organization={},
%%             addressline={},
%%             city={},
%%             postcode={},
%%             state={},
%%             country={}}
%% \fntext[label3]{}

\title{DIT: Dimension Reduction View on Optimal NFT Rarity Meters}

\author[MIPT]{Dmitry Belousov}
\ead{belousov.da@phystech.edu}
\affiliation[MIPT]{organization={Moscow Institute of Physics and Technology},
            % addressline={Institutskiy per., bld.~9},
            city={Moscow},
            % postcode={141701},
            % state={Moscow Region},
            country={Russia}}

\author[Sk,HSE]{Yury Yanovich}
\ead{{Corresponding author*}{y.yanovich@skoltech.ru}}
% \orcid{0000-0003-4651-7585}
\affiliation[Sk]{organization={Skolkovo Institute of Science and Technology},
            % addressline={Bolshoy Boulevard 30, bld.~1},
            city={Moscow},
            % postcode={121205},
            % state={Moscow},
            country={Russia}}

\affiliation[HSE]{organization={Faculty of Computer Science, HSE University},
            country={Russia}}

\begin{abstract}
    Non-fungible tokens (NFTs) have become a significant digital asset class, each uniquely representing virtual entities such as artworks. These tokens are stored in collections within smart contracts and are actively traded across platforms on Ethereum, Bitcoin, and Solana blockchains. The value of NFTs is closely tied to their distinctive characteristics that define rarity, leading to a growing interest in quantifying rarity within both industry and academia. While there are existing rarity meters for assessing NFT rarity, comparing them can be challenging without direct access to the underlying collection data. The Rating over all Rarities (ROAR) benchmark addresses this challenge by providing a standardized framework for evaluating NFT rarity. This paper explores a dimension reduction approach to rarity design, introducing new performance measures and meters, and evaluates them using the ROAR benchmark. Our contributions to the rarity meter design issue include developing an optimal rarity meter design using non-metric weighted multidimensional scaling, introducing Dissimilarity in Trades (DIT) as a performance measure inspired by dimension reduction techniques, and unveiling the non-interpretable rarity meter DIT, which demonstrates superior performance compared to existing methods.
\end{abstract}

\begin{keyword}
%% keywords here, in the form: keyword \sep keyword
Blockchain \sep NFT \sep Rarity \sep Benchmark \sep Dimension Reduction \sep Unidimensional Scaling

\end{keyword}

\end{frontmatter}

%% \linenumbers

%% main text
\section{Introduction}
\label{sec:intro}

Non-fungible tokens (NFTs) have become a transformative digital asset, with each token representing a unique identifier often linked to digital objects such as artworks~\cite{Oliveira2018,Angelo2020,Wang2021}. Some NFTs are stored in collections within blockchain-based smart contracts and are actively traded on platforms such as OpenSea, Solanart, and Singular across blockchains including Ethereum, Bitcoin, and Solana \cite{Wood2016,Nakamoto2008,Yakovenko2018}. 

The value of NFTs is determined by a set of unique traits that define each token within a collection (see Figure \ref{fig:BAYC173}). While all NFTs are inherently unique, their trade values vary significantly based on factors such as rarity. This has spurred interest in measuring and quantifying NFT rarity, both within the industry and in academic research, particularly in the context of blockchain-based information systems.

Rarity meters have been developed to assess NFT rarity~\cite{Rarity.tools2021}. However, comparing these meters is challenging due to the lack of convenient access to underlying NFT collection data~\cite{Krasnoselskii2023}. To address this, the Rating over all Rarities (ROAR) benchmark was introduced~\cite{Belousov2024}, providing a standardized framework for evaluating NFT rarity. This benchmark is particularly relevant for blockchain-based systems, as it enhances trust and transparency in NFT valuation. Nevertheless, the performance measure in ROAR makes it computationally intensive and slow to compute.

Rarity serves as a one-dimensional representation of the dataset, where tokens with higher values are deemed more valuable. This paper explores a dimension reduction perspective on rarity design, resulting in new performance measures and meters. Our work contributes to the broader field of blockchain-based information systems by addressing the challenges of rarity quantification in a decentralized and trustless environment.

The main contributions of our paper are as follows:
\begin{enumerate}
    \item Formulate an optimal rarity meter design using non-metric weighted unidimensional scaling.
    \item Generalize a computational method for non-metric unidimensional scaling for measurements weighted with arbitrary weights.
    \item Introduce Dissimilarity in Trades (DIT) as a performance measure, inspired by dimension reduction and computationally simpler than traditional weighted correlation methods.
    \item Introduce the non-interpretable rarity meter DIT, which demonstrates superior performance compared to existing competitors at the cost of losing interpretability.
\end{enumerate}

\begin{figure}[h!]
    \includegraphics[width=0.95\linewidth]{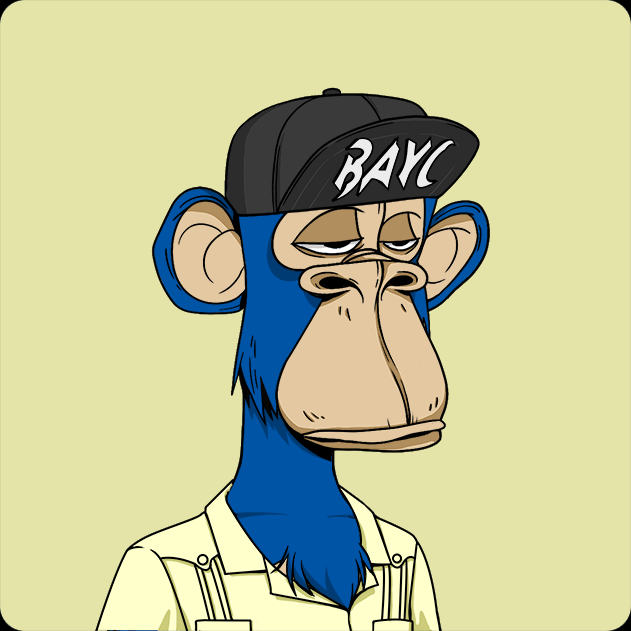}
    \centering
    \caption{NFT $n=173$ is part of the Bored Ape Yacht Club (BAYC) collection, which consists of $10,000$ unique tokens. The contract address for this collection is \texttt{0xbc4ca0eda7647a8ab7c2061c2e118a18a936f13d}. BAYC tokens have seven distinct traits: background, clothes, eyes, hats, earrings, fur, and mouth. For NFT $n=173$, the trait values are as follows: Yellow, Guayabera, Bored, Bayc Flipped Brim, None, Blue, and Bored.}
    \label{fig:BAYC173}
\end{figure}

The remainder of the paper is structured as follows. Section~\ref{sec:RelatedWork} reviews related work on NFT rarity, rarity meter designs, and dimension reduction techniques. Section~\ref{section:Background} provides formal definitions and an overview of state-of-the-art rarity meters. Section~\ref{sec:DimensionReduction} introduces the dimension reduction framework for designing optimal rarity meters. Sections~\ref{sec:OptimalRarityMeterDesign} and~\ref{section:DITMethodology} present the theoretical foundations for the DIT rarity meter and its methodology, respectively. Section~\ref{sec:Experiments} details the numerical experiments and performance evaluations conducted using the ROAR benchmark. Section~\ref{sec:Discussion} discusses the implications of our findings for blockchain-based information systems. Finally, Section~\ref{sec:Conclusion} concludes the paper and outlines future research directions.

\section{Related Work}
\label{sec:RelatedWork}

\subsection{The Demand on Rarity}

Rarity has long been a driving force in various markets, from collectibles to luxury goods, as it appeals to the human desire for uniqueness and exclusivity. Collectors often value rare items not just for their utility but for their scarcity, which can significantly influence market prices. For instance, in the numismatic market, rare coins often command higher prices due to their limited mintage and historical significance \cite{PauleVianez2022}. Similarly, brands leverage rarity to create a sense of exclusivity, thereby increasing consumer demand and profitability \cite{Chailan2018}. However, quantifying the exact value of rarity remains a complex task, as it often intertwines with other factors such as design, historical context, and market trends.

The impact of rarity on pricing has been empirically studied in various contexts. For example, \cite{Koford1998} examined the effect of original mintage on the prices of Morgan Dollars and Liberty Head Eagles from 1878 to 1921. By analyzing 95 mintages, the authors isolated the influence of rarity from other factors such as design and beauty. Their linear model, which used the logarithms of coin age and mintage as features, demonstrated that rarity had a stronger influence on market prices than age, particularly in trade data from 1988 and 1993 \cite{Hastie2009}.

In the realm of collectible card games, \cite{Hughes2022} investigated the impact of rarity on the secondary market prices of ``Magic: the Gathering'' (MTG) cards. While MTG cards have intrinsic utility in gameplay, their rarity--ranging from common to mythic rare--also plays a significant role in determining their market value. Hughes collected price data from TCGplayer for sixteen recent releases over a six-week period in March-April 2019. His linear model, which incorporated in-game attributes, rarity indicators, and the odds of obtaining the card, showed a moderate but statistically significant fit to the trade data. The study found that publisher-assigned rarity indicators and the odds of obtaining the card were important factors influencing card prices in the secondary market.

Beyond collectibles, rarity also plays a crucial role in other domains such as luxury goods, art, and even conservation. For example, the rarity of certain species can drive eco-tourism, while the scarcity of luxury items can enhance their desirability and market value. These examples underscore the pervasive influence of rarity across different markets and its importance in understanding consumer behavior and pricing dynamics.

\subsection{Rarity Meter Designs}

The design of rarity meters for NFT collections has evolved significantly since the 2021 surge in NFT popularity. Unlike traditional collectibles such as coins or MTG cards, NFTs are characterized by structured sets of traits, which necessitate specialized approaches to quantify rarity. This subsection reviews the development of rarity meters, focusing on their design principles and methodologies.

The first widely adopted rarity meter was introduced by Rarity.tools \cite{Rarity.tools2021}, which proposed a formula based on the summation of the inverse frequencies of trait values. This approach, while simple, became the de facto industrial standard and was integrated into various platforms~\cite{SlashdotMedia2022}. However, the formula lacks theoretical justification and does not account for the specific characteristics of individual collections, such as trait dependencies or visual aesthetics.

In response to the limitations of Rarity.tools, academic efforts have sought to develop more robust and interpretable rarity meters. The KRAMER project, introduced for the Kanaria NFT collection hackathon in 2021~\cite{Krasnoselskii2022}, proposed a workflow to compare rarity meters using trade data, a machine learning pipeline to train an optimal rarity meter, and a tournament score as building block for rarity meter. The KRAMER meter, detailed in \cite{Krasnoselskii2023}, computes a weighted correlation between pairwise relative rarities and sell prices for recent deals, providing a more data-driven approach to rarity assessment. However, despite its academic rigor, KRAMER has seen limited adoption in the industry.

OpenSea, the largest NFT marketplace by trade volume~\cite{White2022}, introduced OpenRarity in collaboration with Curio, icy.tools, and PROOF~\cite{OpenRarity2022}. OpenRarity focuses on traits entropy and incorporates unique items and the number of non-default traits to align with expert expectations. This community-driven formula has been adopted by OpenSea and represents a significant step forward in rarity meter design.

In addition, a new approach called NFTVis \cite{Yan2023} has been suggested to assess the performance of NFTs based on visual features. This method estimates a token's rarity separately based on its traits and associated image. The NFTGo \cite{NFTGo2023} is used as a traits-based rarity score. NFTGo is calculated using the Jaccard distance, while the image-based score is obtained as the average difference between the considered image and others in the collection. This dual approach provides a more comprehensive assessment of NFT rarity, combining both trait-based and visual feature-based metrics.

The ROAR benchmark \cite{Belousov2024} represents a significant advancement in the field by providing a comprehensive dataset of 100 popular NFT collections and evaluating the performance of various rarity meters. The study introduced the ROAR rarity meter, a combination of Rarity.tools, KRAMER, OpenSea, and NFTGo, which demonstrated superior performance compared to other meters. This benchmark serves as a valuable resource for testing and refining new rarity meter designs.

While the aforementioned approaches have contributed to the understanding of NFT rarity, they are not without limitations. For instance, the study \cite{Lee2024} provides valuable insights into the relationship between rarity and NFT prices using formal concept analysis (FCA). The authors demonstrate that rarity significantly influences NFT prices, particularly within the medium rarity range, while other factors such as uniqueness, appeal, and image naturalness become more relevant at extreme rarity levels. However, the study does not provide a formalized algorithm for rarity assessment, and its analysis is limited to a single BAYC collection.

\subsection{Rarity Meters Applications}  
\label{subsec:RarityMetersApplications}  

Rarity meters have found diverse applications in the NFT ecosystem, ranging from pricing mechanisms to investment analysis and recommendation systems. These applications highlight the importance of rarity as a key factor influencing NFT valuation and market dynamics.  

One of the earliest studies on NFT market structure by~\cite{Nadini2021} analyzed transactions and metadata from Ethereum and WAX blockchains, revealing that trading history is a strong predictor of NFT prices, while visual properties play a lesser role. Building on this foundation, \cite{Mekacher2022} demonstrated a positive correlation between rarity and sale prices, as well as a negative correlation between rarity and sales volume, using a dataset of Ethereum NFT collections. These findings underscore the significance of rarity in determining NFT market behavior and provide empirical evidence for its role in pricing dynamics.  

From an economic perspective, \cite{Schaar2022} applied hedonic regression to analyze the investment performance of the CryptoPunks NFT collection. The study found that rare traits significantly impact NFT prices, aligning with traditional methods used in art and real estate markets. This approach not only validates the importance of rarity in digital asset pricing but also bridges the gap between traditional and digital markets, offering a robust framework for future research.  

% The NFTVis method \cite{Yan2023} introduced a dual approach to rarity assessment, combining trait-based and visual feature-based metrics. By using the Jaccard distance for traits and average image differences for visual features, NFTVis provides a more comprehensive evaluation of NFT rarity. This method has been integrated with NFTGo \cite{NFTGo2023}, a traits-based rarity score, to enhance the accuracy of rarity assessments. The integration of these tools demonstrates the evolution of rarity meters from simple trait-based models to more sophisticated, multi-dimensional approaches.  

Recent advancements in deep learning have enabled the development of hybrid recommendation systems for NFTs. For example, \cite{Aydogdu2024} proposed a system that combines rarity metrics with visual and transactional data to provide personalized NFT recommendations. This approach leverages transfer learning techniques to predict NFT prices and sales characteristics, offering valuable insights for investors and collectors. Similarly, \cite{Pala2024} utilized transfer learning of visual attributes to predict NFT prices, demonstrating the potential of machine learning in enhancing rarity-based applications. These systems not only improve the accuracy of rarity assessments but also expand the practical utility of rarity meters in real-world scenarios.  

By integrating findings from these studies, it becomes evident that rarity meters have evolved from simple tools for assessing scarcity to sophisticated systems that influence pricing, investment strategies, and user recommendations. This progression underscores the growing importance of rarity in the NFT ecosystem and highlights the need for continued innovation in rarity meter design and application.

\subsection{Unidimensional Scaling}  
\label{subsec:UnidimensionalScaling}  

Rarity meters employ dimensionality reduction to map complex datasets into a one-dimensional space, enhancing analytical simplicity and interpretability. While multidimensional scaling (MDS) is a widely used technique for such tasks, reducing the output dimension to one introduces unique challenges. This special case, termed unidimensional scaling (UDS), imposes constraints inherent to linear ordering: points cannot be rearranged without disrupting spatial relationships or introducing discontinuities. These constraints distinguish UDS from higher-dimensional MDS and demand specialized algorithms tailored to preserve topological integrity in one-dimensional representations.  

The development of dimensionality reduction methods began with linear approaches like principal component analysis (PCA) \cite{Pearson1901}. As datasets grew more complex, non-linear methods such as MDS gained prominence \cite{Cox2008}, with applications extending to non-metric dissimilarities and reliability analysis \cite{Leeuw2009}. The emergence of the manifold hypothesis \cite{Seung2000,Levina2004,Gomtsyan2019} further advanced the field by positing that high-dimensional data often lies near intrinsic low-dimensional manifolds. This insight spurred innovations in manifold learning algorithms \cite{Belkin2003,Huo2008,VectorDiffusionMaps,Bernstein2015b}, which aim to uncover latent geometric structures.  

Neural networks have also contributed to dimensionality reduction through autoencoders \cite{Baldi2012,Bank2023}, which learn compact representations of data by encoding it into lower-dimensional latent spaces. Recent work has extended this framework to model implicit manifolds \cite{Ross2022}, demonstrating their potential for advancing data analysis paradigms.  

In this paper, we demonstrate that relaxing interpretability requirements reduces the problem to a weighted UDS formulation. While prior work by Pliner \cite{Pliner1984,Pliner1986,Pliner1996} explores distance-based weighting using power transformations, our approach generalizes the weight structure. Although Pliner’s framework acknowledges weighted configurations, it lacks a universal proof of optimality and a concrete optimization algorithm--gaps we address here. Additionally, insights from \cite{Ge2005} and \cite{Hubert2002} on UDS in constrained settings, such as predefined object ordering, underscore its utility in simplifying complex datasets. These advances inform our adaptation of UDS to NFT rarity meters, enabling robust quantification of rarity in one-dimensional spaces.

\section{Background: Interpretable Rarity Meters}
\label{section:Background}

Rarity meters are essential tools for evaluating the uniqueness and value of NFTs within a collection. These meters assign a rarity score to each token based on its traits, with higher scores indicating greater rarity. The design of interpretable rarity meters--analytical functions of traits--is crucial for ensuring transparency and consistency in the evaluation process. This section provides an overview of the key concepts and methodologies used in the development of interpretable rarity meters, drawing from established approaches and performance evaluation techniques.

\subsection{Formal Definitions and Performance Measures}
\label{subsection:PM}

An NFT collection with \( N \) tokens and \( T \) traits is denoted as \( \mathbb{X}_N = \{X_n\}_{n=1}^N \), where \( X_n = (x_{n,1},\dots,x_{n,T}) \). A \textbf{rarity meter} for such a collection is defined as a function \( R: \mathbb{X}_N \rightarrow [0,\infty) \), where a higher value \( R(X) \) indicates greater rarity for token \( X \in \mathbb{X}_N \).

The performance of a rarity meter is evaluated using a \textbf{weighted correlation} \( F_{wc}(R; \mathbb{X}_N) \), which ranges from \([-1, 1]\). Weighted correlation is calculated based on closely timed pairs of NFT deals, providing a measure of how well the rarity meter aligns with market dynamics. A higher value of \( F_{wc}(R; \mathbb{X}_N) \) indicates better performance of the rarity meter \( R \) on the collection \( \mathbb{X}_N \). This measure ensures that the rarity meter aligns well with the market value of the tokens, as it captures the relationship between rarity and price.

\subsection{State-of-the-Art Rarity Meters}

Several state-of-the-art rarity meters have been developed, each employing different methodologies to calculate rarity scores:

\subsubsection{Rarity.tools} This approach computes individual scores for all traits and sums them to provide the resulting rarity. The individual scores are inverse fractions of the trait rarity value in the collection. Additionally, Rarity.tools introduces a meta-trait called \textbf{traits count}, which represents the number of non-\texttt{None} traits of an individual token. The \textbf{Rarity.tools rarity meter} \( R_{rt} \) for a given token \( X_k \in \mathbb{X}_N \) is defined as:
   \[
   R_{rt}(X_k) = \sum_{t=1}^{T+1} \texttt{score\_rt}_t(X_k),
   \]
   where for \( t=1, \dots, T+1 \):
   \[
   \texttt{score\_rt}_t(X_k) = \frac{N}{\#\{x \in \mathbb{X}_{N,t}|x= x_{k,t}\}}.
   \]

\subsubsection{KRAMER} The KRAMER rarity meter computes individual scores for all traits and provides the resulting rarity as their weighted sum, with coefficients that maximize the performance measure \( F_{wc} \). The individual scores are derived from a tournament-style comparison of trait groups, where fewer tokens with a given trait value result in a higher score. The \textbf{KRAMER rarity meter} \( R_{kr} \) for a given token \( X_k \in \mathbb{X}_N \) is defined as:
   \[
   R_{kr}(X_k) = \sum_{t=1}^T \alpha_t \cdot \texttt{score\_kr}_t(X_k),
   \]
   where \( \alpha_1, \dots, \alpha_N \) are the solution of:
   \[
   F_{wc}(R;\mathbb{X}_N) \rightarrow \max_{R \in \mathcal{R}}.
   \]

\subsubsection{OpenRarity} This approach calculates the Shannon information of a token and normalizes it by the collection average. OpenRarity introduces additional heuristics, such as the \textbf{Double Sort} and \textbf{Trait Count}, to refine the rarity calculation. The \textbf{OpenRarity rarity meter} \( R_{or} \) for a given token \( X_k \in \mathbb{X}_N \) is defined as:
   \[
   R_{or}(X_k) = R_{or,0}(X_k) + \texttt{reg}(X_k),
   \]
   where:
   \[
   R_{or,0}(X_k) = \frac{I(X_k)}{\mathbb{E} I(X)}, \quad \texttt{reg}(X_k) = x_{k,0} \cdot \frac{(T + 1) \log N}{\mathbb{E} I(X)}.
   \]

\subsubsection{NFTGo} The NFTGo rarity meter calculates the Jaccard distance in the trait space between a specific NFT and the entire collection. The Jaccard distance is normalized to ensure the rarity score falls within a specific range. The \textbf{NFTGo rarity meter} \( R_{go} \) for a given token \( X_k \in \mathbb{X}_N \) is defined as:
   \[
   R_{go}(X_k) = \text{normalize}\left(\sum_{n=1}^N \left(1 - \texttt{JD}(X_k, X_n)\right)\right),
   \]
   where normalize maps \( R_{go} \) such that the minimum value on the collection is \( 0 \) and the maximum value equals \( 100 \).

\subsubsection{ROAR Rarity Meter} Inspired by the state-of-the-art approaches, the \textbf{ROAR rarity meter} is proposed as an ensemble of other rarity building blocks. The ROAR rarity meter combines elements from Rarity.tools, KRAMER, OpenRarity, and NFTGo to optimize the performance measure \( F_{wc}(R; \mathbb{X}_N) \). The \textbf{ROAR rarity meter} \( R_{rr} \) for a given token \( X_k \in \mathbb{X}_N \) is defined as:
\[
R_{rr}(X_k) = \arg\max_{R \in \mathcal{R}_{rr}} F_{wc}(R;\mathbb{X}_N),
\]
where \( \mathcal{R}_{rr} \) is the set of non-negative combinations of \( \texttt{score\_rt}_t \), \( \texttt{score\_kr}_t \), \( \texttt{score\_or}_t \), and \( R_{go} \).

\subsection{Performance Evaluation}

The dataset used for evaluating the performance of rarity meters includes \textbf{100 NFT collections}, each with traits and trade data \cite{Belousov2024}. To assess rarity meters, the trade data for each collection is split into training (70\% of earliest transactions) and testing (30\% of later transactions) sets, ensuring the model is trained on historical data and evaluated on recent transactions.

The performance of a rarity meter is measured using the \textbf{weighted correlation} \( F_{wc}(R; \mathbb{X}_N) \), where higher values indicate better alignment between rarity scores and market value. To compare rarity meters across collections, \textbf{performance profiles} are used. These profiles show the fraction of collections where the performance measure \( F \) for a given rarity meter is within a certain distance \( \tau \) of the best-performing meter on that collection. The \textbf{performance profile} \( \rho \) of a rarity meter \( R_m \) in a set of rarity meters \( R_1, \dots, R_M \) is defined as:
\[
\rho(\tau; R_m) = \frac{1}{C} \sum_{c=1}^C \left[F_{m,c} + \tau \geq \max\{F_{i,c}\}_{i=1}^M\right],
\]
where \( [A] \) is the indicator of an event \( A \), and \( C = 100 \) is the number of collections in the dataset.

This evaluation methodology ensures a robust and fair comparison of rarity meters, providing insights into their effectiveness across a diverse range of NFT collections.

\section{Dimension Reduction View on Rarity Meters}  
\label{sec:DimensionReduction}  

The concept of rarity in NFTs is inherently tied to market dynamics, as the only objective measure of rarity is derived from trade data. However, NFT prices are highly volatile, influenced by factors such as fluctuations in native blockchain cryptocurrency prices and shifts in market interest \cite{Krasnoselskii2023}. To mitigate this volatility, it is essential to consider pairs of deals that are close in time rather than individual transactions. This approach provides a more stable basis for assessing rarity and its relationship to market value.  

The weighted correlation \( F_{wc} \) suggests that the vectors of pairwise rarities and deal prices should exhibit similarity. Close-in-time deals lead to the construction of an NFT token dissimilarity matrix, where values closer to \( 0 \) indicate greater similarity between two tokens. For the weighted correlation \( F_{wc} \), the dissimilarity of a pair of deals \( \{(t_d, i_d, p_d)\}_{d=1}^2 \) is defined as  
$ 
k(t_1, t_2) \left| \ln \frac{p_1}{p_2} \right|,  
$
where \( k(t_1, t_2) \) is a kernel function that assigns higher weights to deals closer in time. 
% This dissimilarity measure is asymmetric and potentially negative, but taking the absolute value ensures symmetry and positivity.  

By aggregating all deals for a specific pair of NFT indices \( (i, j) \), we obtain the weighted dissimilarity \( \delta_{ij} \) and its corresponding weight \( w_{ij} \):  
\begin{eqnarray*}  
w_{ij} &=& \sum_{d_1, d_2: \;\; (i_{d_1}, i_{d_2}) = (i, j)} k(t_{d_1}, t_{d_2}), \\  
\delta_{ij} &=& \frac{1}{w_{ij}} \sum_{d_1, d_2: \;\; (i_{d_1}, i_{d_2}) = (i, j)} k(t_{d_1}, t_{d_2}) \left| \ln \frac{p_{d_1}}{p_{d_2}} \right|.  
\end{eqnarray*}  

Rarity meters provide a one-dimensional output, with the native dissimilarity for tokens \( (i, j) \) given by:  
\[  
d_{ij}(R) = \left|R(X_i) - R(X_j)\right|.  
\]  
The DIT performance of the rarity meter induced by dissimilarity can be assessed using non-metric weighted unidimensional scaling (NWUDS):  
\[  
F(R; \mathbb{X}_N) = \sqrt{\frac{\sum_{i,j=1}^N w_{ij} \left( d_{ij}(R) - \delta_{ij} \right)^2}{\sum_{i,j=1}^N w_{ij} d_{ij}(R)^2}}.  
\]  
To eliminate the effect of the scale factor, normalization and square root are applied.

The key advantage of DIT's $F$ over ROAR's $F_{wc}$ lies in its computational efficiency. While $F_{wc}$ requires iterating through pairwise deals for each computation of the performance measure, $F$ only needs to process the dissimilarities and weights matrices once. After this initial step, $F$ operates exclusively on $N \times N$ matrices, which is significantly faster than working with the vector of pairwise deals. This efficiency makes $F$ more practical for large-scale applications.

The optimal design for a rarity meter under \( F(R; \mathbb{X}_N) \) involves finding a set of numbers \( \vec{R} = (R_1, \dots, R_N)^T = (R(X_1), \dots, R(X_N))^T \) that minimizes the NWUDS objective function. This problem can be viewed as one-dimensional non-metric weighted multidimensional scaling, where the goal is to preserve the pairwise dissimilarities \( \delta_{ij} \) in the one-dimensional rarity space.  

Given that $F$ is shift-invariant, the solution can be shifted to ensure that \( \min_{n=1, \dots, N} R_n = 1 \), ensuring non-negativity of rarity scores. Additionally, since \( F(\vec{R}; \mathbb{X}_N) = F(-\vec{R}; \mathbb{X}_N) \), the sign of the solution can be chosen to maximize the total rarity-weighted log-price, aligning the rarity scores with market value.

At the same time, $F$ is not scale-invariant. This motivates the adjustment of a scalar factor for the rarity meters, which were originally not designed for $F$:

\begin{statement}
\label{statement:scale}
    The optimal scalar factor $\alpha$ that minimizes the performance measure $F$ in the set of rarity meters $\{\alpha R \mid \alpha > 0\}$ for a fixed rarity meter $R$ is given by:
    $$
    \arg\min_{\alpha > 0} F(\alpha R; \XX_N) = \frac{\sum\limits_{i, j = 1}^N \omega_{ij} \delta^2_{ij}}{\sum\limits_{i, j = 1}^N \omega_{ij} d_{ij} \delta_{ij}}.
    $$
\end{statement}

\begin{proof}
    The task is to minimize the following expression with respect to $\alpha$:
    $$
    F(\alpha R; \mathbb{X}_N) = \sqrt{\frac{\sum\limits_{i, j = 1}^N \omega_{ij} (\alpha d_{ij} - \delta_{ij})^2}{\sum\limits_{i, j = 1}^N \omega_{ij} \alpha^2 d_{ij}^2}}.
    $$
    To find the optimal $\alpha$, we square $F(\alpha R; \mathbb{X}_N)$ and solve for $\alpha$ by setting the derivative of $F^2$ with respect to $\alpha$ to zero:
    $$
    \frac{d}{d\alpha} \left( F^2 \right) = 0.
    $$
    By solving this equation and verifying the sign of the derivative in the neighborhood of the solution, we confirm the statement.
\end{proof}

This dimension reduction framework provides a principled approach to designing rarity meters that are both interpretable and aligned with market dynamics. By leveraging the inherent structure of NFT trade data, it bridges the gap between theoretical rarity assessment and practical market applications.

\section{Optimal Rarity Meter Design via Unidimensional Scaling}  
\label{sec:OptimalRarityMeterDesign}  

The optimal design for a rarity meter under $F(R; \mathbb{X}_N)$ essentially represents NWUDS. The problem of finding the best configuration, i.e., minimizing stress, has been approached using various methods, from gradient-based to Monte Carlo techniques \cite{Guttman1968Dec,Brusco2001Jan}. The main difficulty of this problem is the vast number of local extrema of the stress function~\cite{Pliner1996}. Despite this, several methods can be applied under certain conditions~\cite{Hubert2002}. Without loss of generality, we will consider arbitrary vectors, each component of which represents one point, as $\vec{x}$. For example, for rarity meters $\vec{x} = (R(X_1), \dots, R(X_N))^T$. The stress to optimize is denoted $S(\vec{x})$.

Pliner \cite{Pliner1986,Pliner1996} suggested minimizing a smoothed version of the multiextremal stress function. Following his algorithm developed for a special case of weights $w = d^p$, for arbitrary non-negative weights, we can write:  
$$  
S(\vec{x}) = \sum\limits_{i, j = 1}^N w_{ij}(d_{ij} - \delta_{ij})^2,  
$$  
$$  
S_\varepsilon(\vec{x}) = \frac{1}{\varepsilon^{N}} \int\limits_{D(\vec{x}, \varepsilon)} S(\vec{y}) d\vec{y},  
$$  
where $D(\vec{x}, \varepsilon)$ is a cube in $\mathbb{R}^N$ with center at $\vec{x}$ and side  $\varepsilon$. Evaluating this integral yields:  
$$  
S_\varepsilon(\vec{x}) = \sum\limits_{i < j} w_{ij} \left( (x_i - x_j)^2 - 2 d_{ij} g_\varepsilon(x_i - x_j) \right) + c,  
$$  
where $c$ is a constant and:  
$$  
g_\varepsilon(t) = \begin{cases}  
\frac{t^2 (3\varepsilon - |t|)}{3 \varepsilon^2} + \frac{\varepsilon}{3}, & |t| < \varepsilon, \\  
|t|, & |t| \geq \varepsilon.  
\end{cases}  
$$  

As $S_\varepsilon(\vec{x})$ is twice continuously differentiable, we can derive a necessary condition for the minimum by equating its partial derivatives to zero:  
$$  
x_i = \sum\limits_{n = 1}^N w_{in} \left( x_n - d_{in} u_\varepsilon (x_i - x_n) \right) \Big/ \sum\limits_{n = 1}^N w_{in},  
$$  
where $i = 1, \ldots, N$, and:  
$$  
u_\varepsilon(t) = \begin{cases}  
\frac{t}{\varepsilon} \left( 2 - \frac{|t|}{\varepsilon} \right), & |t| < \varepsilon, \\  
\text{sign}(t), & |t| \geq \varepsilon.  
\end{cases}  
$$  

The minimum of $S_\varepsilon(\vec{x})$  can be reached with the following iterative algorithm:  
$$  
x_i^{q+1} = \sum\limits_{j = 1}^N w_{ij} \left( x_j^q - d_{ij} u_\varepsilon (x_i^q - x_j^q) \right) \Big/ \sum\limits_{j = 1}^N w_{ij}.  
$$  

\begin{theorem}
    For any initial approximation $\vec{x}^0$, $\lim\limits_{q \rightarrow \infty} \nabla S_\varepsilon (\vec{x}^q) = 0$, and $S_\varepsilon(\vec{x})$ is monotone non-increasing, i.e., $S_\varepsilon(\vec{x}^{q+1}) \leq S_\varepsilon(\vec{x}^{q})$.  
\end{theorem}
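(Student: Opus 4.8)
The plan is to read the iteration as a majorization--minimization scheme for the smoothed stress $S_\varepsilon$ and then run the standard ``monotone descent $+$ summable decrements $\Rightarrow$ gradient $\to 0$'' argument. First I would record the elementary facts about the smoothing kernel: $g_\varepsilon \in C^2(\mathbb{R})$ with $g_\varepsilon' = u_\varepsilon$ and $0 \le g_\varepsilon'' \le 2/\varepsilon$, so $g_\varepsilon$ is convex and $\tfrac{2}{\varepsilon}$-smooth. Writing $W = (w_{ij})$, $D = \operatorname{diag}\big(\sum_j w_{ij}\big)$ and $L = D - W$ for the weighted graph Laplacian, the separable pairwise structure of $S_\varepsilon$ gives $S_\varepsilon \in C^2$ with $\nabla^2 S_\varepsilon(\vec x) \preceq 2L$ for all $\vec x$; moreover $S_\varepsilon$ depends only on the differences $x_i - x_j$ (hence is translation-invariant) and is bounded below, since $(x_i - x_j)^2 - 2\delta_{ij}g_\varepsilon(x_i - x_j) \ge -\delta_{ij}^2 - \tfrac{2}{3}\varepsilon\,\delta_{ij}$ using $g_\varepsilon(t) \le |t| + \varepsilon/3$. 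Finally I would rewrite the iteration compactly as $\vec x^{q+1} = \vec x^q - \tfrac12 D^{-1}\nabla S_\varepsilon(\vec x^q)$, exhibiting it as a diagonally preconditioned gradient (one-sweep Jacobi) step.

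For the monotonicity claim, linearize the concave terms $-2\delta_{ij}g_\varepsilon(\cdot)$ at the current iterate to get the quadratic majorant $M(\,\cdot\,;\vec x^q)$ with constant Hessian $2L$ that agrees with $S_\varepsilon$ to first order at $\vec x^q$; convexity of $g_\varepsilon$ (equivalently $\nabla^2 S_\varepsilon \preceq 2L$ globally) gives $S_\varepsilon(\vec x) \le M(\vec x;\vec x^q)$ for every $\vec x$. A short computation yields $M(\vec x^q;\vec x^q) - M(\vec x^{q+1};\vec x^q) = \vec s^{q\top}(D+W)\vec s^q$ with $\vec s^q = \tfrac12 D^{-1}\nabla S_\varepsilon(\vec x^q)$, and $D + W \succeq 0$ because $\vec v^\top(D+W)\vec v = \sum_{i<j} w_{ij}(v_i + v_j)^2 \ge 0$. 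Chaining, $S_\varepsilon(\vec x^{q+1}) \le M(\vec x^{q+1};\vec x^q) \le M(\vec x^q;\vec x^q) = S_\varepsilon(\vec x^q)$, which is the asserted monotone non-increase.

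To obtain $\nabla S_\varepsilon(\vec x^q) \to 0$, combine monotonicity with boundedness below: $\{S_\varepsilon(\vec x^q)\}$ converges, so summing the per-step bound gives $\sum_q \vec s^{q\top}(D+W)\vec s^q < \infty$, hence $\vec s^{q\top}(D+W)\vec s^q \to 0$. When the weight graph with edge set $\{w_{ij} > 0\}$ is connected and not bipartite --- in particular whenever all $w_{ij} > 0$ and $N \ge 3$ --- one has $D + W \succ 0$, and since $\nabla S_\varepsilon(\vec x^q) = 2D\vec s^q$ we get $\|\nabla S_\varepsilon(\vec x^q)\|^2 \le \tfrac{4\,d_{\max}^2}{\lambda_{\min}(D+W)}\,\vec s^{q\top}(D+W)\vec s^q \to 0$, with $d_{\max} = \max_i D_{ii}$. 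In the degenerate case I would decompose $\vec s^q$ into its components in $\ker(D+W)^\perp$ and $\ker(D+W)$: the first tends to $0$ as above, while the second is controlled using translation-invariance together with the coercivity of $S_\varepsilon$ in the pairwise differences on the sublevel set $\{S_\varepsilon \le S_\varepsilon(\vec x^0)\}$ (a persistent kernel component would force a drift along which $S_\varepsilon$ increases, contradicting the monotone decrease), again giving $\nabla S_\varepsilon(\vec x^q) \to 0$.

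The routine parts are the kernel estimates on $g_\varepsilon, u_\varepsilon$ and the quadratic algebra in the majorization step; the one genuine obstacle is the third paragraph --- passing from ``the preconditioned, $W$-weighted stress decrements vanish'' to ``the plain gradient vanishes'' in full generality --- because the form $D + W = 2D - L$ that governs the per-step decrease can be singular. I would either impose strict positivity of the weights (so $D + W \succ 0$ and the gap closes at once, which is the regime of interest here and, I expect, of Pliner's original analysis) or treat the kernel separately via the translation-invariance and compactness argument sketched above.
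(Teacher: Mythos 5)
Your argument is correct in substance but takes a genuinely different route from the paper. The paper's proof is a short appeal to Polyak's theorem for gradient descent with a \emph{constant} step-size: it identifies the update with $\vec{x}^{q+1}=\vec{x}^{q}-\alpha\nabla S_\varepsilon(\vec{x}^{q})$, $\alpha=1/(2N)$, checks that $S_\varepsilon$ is bounded below and that $\nabla S_\varepsilon$ is Lipschitz with $L=2N$, and concludes both monotone decrease and $\nabla S_\varepsilon(\vec{x}^{q})\to 0$. You instead read the update as the diagonally preconditioned step $\vec{x}^{q+1}=\vec{x}^{q}-\tfrac12 D^{-1}\nabla S_\varepsilon(\vec{x}^{q})$ and run a majorization--minimization argument with the quadratic surrogate of Hessian $2L$, obtaining the exact surrogate decrement $\vec{s}^{q\top}(D+W)\vec{s}^{q}$. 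Your version is the more faithful one for this paper's setting: with arbitrary weights the row sums $D_{ii}=\sum_j w_{ij}$ differ across $i$, so the iteration is \emph{not} a constant-step gradient method, and the constants $\alpha=1/(2N)$, $L=2N$ are literally correct only in the unweighted case $w_{ij}\equiv 1$; your preconditioned/MM formulation is exactly the generalization the weighted case needs. Both proofs ultimately rest on the same structural facts (convexity and $2/\varepsilon$-smoothness of $g_\varepsilon$, hence $\nabla^2 S_\varepsilon\preceq 2L\preceq 4D$, plus boundedness below), so the approaches are close cousins, but yours makes the step-size bookkeeping explicit where the paper glosses over it.

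The one soft spot is the one you flag yourself: passing from $\vec{s}^{q\top}(D+W)\vec{s}^{q}\to 0$ to $\nabla S_\varepsilon(\vec{x}^{q})\to 0$ needs $D+W\succ 0$, i.e.\ no connected component of the positive-weight graph may be bipartite. This is automatic for complete weight graphs with $N\ge 3$ (Pliner's original setting) but not for the sparse, trade-induced weights considered here. The degeneracy is not an artifact of your method: along a kernel direction of $D+W$ the effective step sits exactly at the critical threshold $2/L$ where Polyak's theorem fails, so the paper's proof has the same hole, merely hidden inside the unverified hypothesis of its cited theorem. Your proposed repair for the bipartite case (split $\vec{s}^{q}$ against $\ker(D+W)$ and exclude a persistent kernel component via translation invariance and coercivity in the pairwise differences) is plausible but only a sketch; it would need to be carried out before the theorem holds for literally arbitrary non-negative weights. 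If you instead assume each component of the weight graph is non-bipartite (e.g.\ $w_{ij}>0$ for all $i\ne j$ and $N\ge 3$), your proof is complete and, unlike the paper's, covers the weighted case in full.
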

\begin{proof}
    The proof is based on the equivalence of the iterative update rule to the gradient method with a constant step-size $\alpha = 1/(2N)$. Following Polyak's theorem on the convergence of the gradient method with a constant step-size \cite{Polyak1983}, we verify that:
    \begin{itemize}
        \item $S_\varepsilon(\vec{x})$ is lower-bounded, as $S(\vec{x}) > 0$ and $S_\varepsilon(\vec{x}) > 0$.
        \item The gradient $\nabla S_\varepsilon(\vec{x})$ satisfies the Lipschitz condition with $L = 2N$, ensuring $$\| \nabla S_\varepsilon(\vec{x}) - \nabla S_\varepsilon(\vec{y}) \| \leq L \| \vec{x} - \vec{y} \|.$$
        \item The step-size $\alpha = 1/(2N)$ satisfies $2 / L = 1/N > \alpha$.
    \end{itemize}
    Thus, the iterative algorithm converges to a stationary point where $\nabla S_\varepsilon(\vec{x}) = 0$, and $S_\varepsilon(\vec{x})$ is monotone non-increasing.
\end{proof}

\begin{theorem}
    If $\varepsilon > 4d^*$, where $$d^* = \max\limits_{1 \leq i \leq N} \left( \sum\limits_{j=1}^N d_{ij} \Big/ \sum\limits_{j=1}^N w_{ij} \right),$$ then $0$ is the only minimum of $S_\varepsilon(\vec{x})$.  
\end{theorem}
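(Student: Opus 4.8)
The plan is to show directly that, once $\varepsilon > 4 d^{*}$, we have $S_\varepsilon(\vec{x}) > S_\varepsilon(\vec{0})$ for every $\vec{x}$ that is not a translate $c\vec{1}$ of $\vec{0}$, so that—after the translation freedom is pinned down (the paper fixes $\min_n x_n = 1$)—the only minimiser is the trivial configuration in which all coordinates coincide. This follows the strategy of Pliner's convexity argument for $w = d^{p}$, adapted to arbitrary weights. Two preliminaries: (i) $S_\varepsilon$ is shift-invariant, because $S(\vec{y})$ depends only on the differences $y_i - y_j$ while the smoothing cube $D(\vec{x},\varepsilon)$ translates rigidly with $\vec{x}$; (ii) $\vec{0}$ is a stationary point, since $u_\varepsilon(0)=0$ makes the necessary condition hold there. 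I also assume the support graph of $(w_{ij})$ is connected; without this the minimiser set is the larger subspace of vectors constant on each connected component, and the statement must be read componentwise.

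First I would use the closed form already derived, $S_\varepsilon(\vec{x}) = \sum_{i<j} w_{ij}\bigl((x_i-x_j)^{2} - 2 d_{ij} g_\varepsilon(x_i-x_j)\bigr) + c$, and subtract its value at the origin. Writing $t_{ij} := x_i - x_j$ and using $g_\varepsilon(0) = \varepsilon/3$, the constant $c$ cancels and
$$S_\varepsilon(\vec{x}) - S_\varepsilon(\vec{0}) = \sum_{i<j} w_{ij}\, t_{ij}^{2} \;-\; 2\sum_{i<j} w_{ij}\, d_{ij}\bigl(g_\varepsilon(t_{ij}) - g_\varepsilon(0)\bigr).$$
The crux is the elementary pointwise inequality $g_\varepsilon(t) - g_\varepsilon(0) \le t^{2}/\varepsilon$, valid for all real $t$: for $|t|\le\varepsilon$ it is immediate from $g_\varepsilon(t) - g_\varepsilon(0) = t^{2}/\varepsilon - |t|^{3}/(3\varepsilon^{2})$ and $|t|^{3}\ge 0$, and for $|t|\ge\varepsilon$ it says $|t| - \varepsilon/3 \le t^{2}/\varepsilon$, i.e. $t^{2} - \varepsilon|t| + \varepsilon^{2}/3 \ge 0$, which holds because that quadratic in $|t|$ has negative discriminant $-\varepsilon^{2}/3$. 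Substituting gives
$$S_\varepsilon(\vec{x}) - S_\varepsilon(\vec{0}) \;\ge\; \sum_{i<j} w_{ij}\Bigl(1 - \tfrac{2 d_{ij}}{\varepsilon}\Bigr) t_{ij}^{2}.$$

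The remaining step is to check that $\varepsilon > 4 d^{*}$ makes every coefficient $1 - 2 d_{ij}/\varepsilon$ (over pairs with $w_{ij}>0$) strictly positive. From the definition of $d^{*}$ one has $d^{*} \ge \bigl(\sum_{l} d_{il}\bigr)\big/\bigl(\sum_{l} w_{il}\bigr) \ge d_{ij}\big/\sum_{l} w_{il}$, hence $d_{ij} \le d^{*}\sum_{l} w_{il}$; under the normalisation of the weights that the form of $d^{*}$ presupposes (so that $\sum_{l} w_{il} \le 2$) this yields $2 d_{ij} \le 4 d^{*} < \varepsilon$. Then the last display is a sum of nonnegative terms and vanishes only when $t_{ij} = 0$ for every weighted pair, i.e. when $\vec{x}$ is constant on the connected support graph, i.e. $\vec{x} \in \RR\vec{1}$. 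Hence $\vec{0}$—equivalently, after fixing the shift, the all-equal vector—is the unique minimiser. An equivalent route is to note that $g_\varepsilon''$ attains its maximum $2/\varepsilon$ at $t=0$, so the Hessian $\nabla^{2} S_\varepsilon(\vec{x})$—a weighted graph Laplacian with edge weights $w_{ij}\bigl(2 - 2 d_{ij} g_\varepsilon''(x_i-x_j)\bigr)$—dominates $\nabla^{2} S_\varepsilon(\vec{0})$ at every $\vec{x}$, so global strict convexity modulo translation reduces to strict convexity at the origin, yielding the same threshold.

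The step I expect to be the real obstacle is not the inequality chain but the bookkeeping that produces the stated constant: the transparent argument above only needs $\varepsilon > 2\,d_{ij}$ for every weighted pair, and matching this to $\varepsilon > 4 d^{*}$ forces one to invoke both the averaged shape of $d^{*}$ and the weight-normalisation convention—an unnormalised version of the theorem is simply false, as a two-point instance with a large weight shows, so this reconciliation is load-bearing rather than cosmetic. A secondary point to state carefully is the disconnected case above; coercivity of $S_\varepsilon$ off $\RR\vec{1}$ (it grows quadratically in the $t_{ij}$ since $g_\varepsilon(t)=|t|$ for large $|t|$) is worth noting but, given the explicit comparison with $S_\varepsilon(\vec{0})$, is not separately needed.
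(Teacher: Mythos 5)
Your route is genuinely different from the paper's. The paper argues through the fixed-point iteration: it asserts that the update map $v_\varepsilon$ is a contraction once $\varepsilon > 4d^*$, hence has a unique fixed point, which must be $\vec{0}$ (up to translation). You instead compare $S_\varepsilon(\vec{x})$ with $S_\varepsilon(\vec{0})$ directly via the pointwise bound $g_\varepsilon(t)-g_\varepsilon(0)\le t^2/\varepsilon$; that inequality is verified correctly in both regimes, the resulting lower bound $\sum_{i<j}w_{ij}\bigl(1-2d_{ij}/\varepsilon\bigr)t_{ij}^2$ is right, and you handle the translation-invariance and connectivity caveats more carefully than the paper does (the paper's claim of a ``unique fixed point'' on all of $\mathbb{R}^N$ cannot be literally true, since $v_\varepsilon$ is translation-equivariant).

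The gap is exactly where you say it is, and it is not closable as written. Your argument needs $\varepsilon > 2d_{ij}$ for every weighted pair, and the bridge to $\varepsilon > 4d^*$ rests on the row-normalization $\sum_l w_{il}\le 2$, which appears nowhere in the paper: the $w_{ij}$ are unnormalized sums of kernel values over deal pairs. Without that assumption the chain $d_{ij}\le d^*\sum_l w_{il}$ yields only $2d_{ij}\le 2d^*\sum_l w_{il}$, which is not $\le 4d^*$; even for unit weights it gives $2(N-1)d^*$. So what you have actually proved is uniqueness under the hypothesis $\varepsilon > 2\max_{\{(i,j):\,w_{ij}>0\}} d_{ij}$, a condition that neither implies nor is implied by $\varepsilon>4d^*$ in general (consider one large dissimilarity among many small ones). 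The ingredient your pointwise bound cannot access is the row-averaging encoded in $d^*$, which is precisely what a Lipschitz estimate for $v_\varepsilon$ exploits: the update for $x_i$ involves $\sum_j w_{ij}d_{ij}u_\varepsilon'(\cdot)/\sum_j w_{ij}$, so only averaged dissimilarities enter the contraction constant, not the largest one. To your credit, the two-point example (large $w_{12}$, $\varepsilon\le d_{12}$, yet $\varepsilon>4d_{12}/w_{12}=4d^*$) is a legitimate strike against the theorem as literally stated --- $d^*$ as defined is not invariant under rescaling the weights while the set of minimizers is, so the statement needs either a normalization convention or a weighted-average definition such as $d^*=\max_i \sum_j w_{ij}d_{ij}\big/\sum_j w_{ij}$ --- but your proof should then be presented as establishing a different sufficient condition, not the one claimed.
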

\begin{proof}
    Let $v_\varepsilon(\vec{x})$ denote the right-hand side of the update rule:
    $$v_\varepsilon(\vec{x}) = \left( \sum\limits_{j=1}^N w_{ij} \left( x_j - d_{ij} u_\varepsilon(x_i - x_j) \right) \Big/ \sum\limits_{j=1}^N w_{ij} \right)_{i=1}^N.$$
    The mapping $v_\varepsilon$ is contracting when $\varepsilon > 4d^*$, as $$\| v_\varepsilon(\vec{x}) - v_\varepsilon(\vec{y}) \| \leq L \| \vec{x} - \vec{y} \|$$ with $L < 1$. This implies that $v_\varepsilon$ has a unique fixed point, which is the solution to $\vec{x} = v_\varepsilon(\vec{x})$. Since $\vec{0} = v_\varepsilon(\vec{0})$, $\vec{0}$ is the unique minimum of $S_\varepsilon(\vec{x})$.
\end{proof}

\begin{theorem}
\( S_\varepsilon(\vec{x}) \) approximates \( S(\vec{x}) \), and the global minimum of \( S_\varepsilon(\vec{x}) \) approximates the global minimum of \( S(\vec{x}) \) as \( \varepsilon \rightarrow 0 \), in the following sense:  

\begin{enumerate}  
  \item \( S_\varepsilon(\vec{x}) \rightarrow S(\vec{x}) \) uniformly on any bounded subset of \( \mathbb{R}^N \) as \( \varepsilon \rightarrow 0 \).  
  \item If \( \vec{x}^* \) is the only global minimum of \( S(\vec{x}) \), then there exists \( \varepsilon_0 > 0 \) such that for all \( \varepsilon < \varepsilon_0 \), \( S_\varepsilon(\vec{x}) \) also has a unique global minimum \( \vec{x}^\varepsilon \), and \( \vec{x}^\varepsilon \rightarrow \vec{x}^* \). Otherwise, for any \( \vec{x}^* \) global minimum of \( S(\vec{x}) \), \( \inf\limits_{\vec{x}^\varepsilon \in X^\varepsilon} \| \vec{x}^\varepsilon - \vec{x}^* \| \xrightarrow[\varepsilon \rightarrow 0]{} 0 \), where \( X^\varepsilon \) is the set of all global minima of \( S(\vec{x}) \).  
\end{enumerate}  
\end{theorem}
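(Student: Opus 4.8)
The plan is to treat Item~1 as an elementary consequence of averaging a continuous function over a shrinking cube, and Item~2 as the standard stability‑of‑minimizers statement for uniformly convergent functions, after two reductions. First, $S$ and every $S_\varepsilon$ depend on $\vec x$ only through the differences $x_i-x_j$, so ``the global minimum'' must be read modulo translations; I would fix the normalization $\sum_n x_n=0$ and work on the hyperplane $H\cong\mathbb{R}^{N-1}$. Second, I assume (as is implicit in the hypothesis that $S$ has an isolated global minimum, and is anyway needed for well‑posedness) that the graph of pairs with $w_{ij}>0$ is connected, so that $\vec x\mapsto\sum_{i<j}w_{ij}(x_i-x_j)^2$ is positive definite on $H$. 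For Item~1 itself, write $S_\varepsilon(\vec x)-S(\vec x)=\varepsilon^{-N}\!\int_{D(\vec x,\varepsilon)}(S(\vec y)-S(\vec x))\,d\vec y$, so $|S_\varepsilon(\vec x)-S(\vec x)|\le\sup_{\vec y\in D(\vec x,\varepsilon)}|S(\vec y)-S(\vec x)|$; since $S$ is continuous, hence uniformly continuous on a fixed compact $1$‑neighbourhood of any bounded set $B$, and $\operatorname{diam}D(\vec x,\varepsilon)=\varepsilon\sqrt N\to0$, we get $\sup_{B}|S_\varepsilon-S|\to0$. (Alternatively the closed form gives the $\vec x$‑free bound $|S_\varepsilon-S|\le\tfrac{2\varepsilon}{3}\sum_{i<j}w_{ij}\delta_{ij}+\tfrac{\varepsilon^2}{6}\sum_{i<j}w_{ij}$, using $\varepsilon^{-N}\!\int_{D(\vec x,\varepsilon)}(y_i-y_j)^2\,d\vec y=(x_i-x_j)^2+\varepsilon^2/6$ and $0\le g_\varepsilon(t)-|t|\le\varepsilon/3$, the latter from $g_\varepsilon(s)-s=\varepsilon\,h(s/\varepsilon)$ with $h(u)=u^2-\tfrac{u^3}{3}+\tfrac13-u$, $h'(u)=-(u-1)^2\le0$, $h(0)=\tfrac13$, $h(1)=0$ — so the convergence is in fact uniform on all of $\mathbb{R}^N$.)

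Next, for compactness: on $H$ one has $S(\vec x)\ge c_1\|\vec x\|^2-c_2\|\vec x\|-c_3$, hence by Item~1 also $S_\varepsilon(\vec x)\ge c_1\|\vec x\|^2-c_2\|\vec x\|-c_3'$ for all $\varepsilon\le1$; thus $S_\varepsilon$ attains its minimum and all minimizers of all $S_\varepsilon$, $\varepsilon\le1$, lie in one fixed compact $K\subset H$. The convergence of minimizers is then the textbook argument: given $\varepsilon_k\to0$ and minimizers $\vec x^{\varepsilon_k}\in K$, pass to a convergent subsequence $\vec x^{\varepsilon_k}\to\bar x$; uniform convergence on $K$ and continuity of $S$ give $S(\bar x)=\lim_k S_{\varepsilon_k}(\vec x^{\varepsilon_k})\le\lim_k S_{\varepsilon_k}(\vec x^*)=S(\vec x^*)=\min_H S$, so $\bar x\in\arg\min_H S$. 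Hence every subsequential limit of every choice of $S_\varepsilon$‑minimizers lies in $\arg\min_H S$, i.e.\ $\sup_{\vec x^\varepsilon\in X^\varepsilon}\operatorname{dist}(\vec x^\varepsilon,\arg\min_H S)\to0$; when $\arg\min_H S=\{\vec x^*\}$ this is precisely $\vec x^\varepsilon\to\vec x^*$, uniformly over the choice of $\vec x^\varepsilon$.

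Uniqueness of $\vec x^\varepsilon$ and the assertion $\inf_{\vec x^\varepsilon\in X^\varepsilon}\|\vec x^\varepsilon-\vec x^*\|\to0$ rest on a key lemma: \emph{at any local minimizer $\vec x^*$ of $S$, every pair satisfies $x_i^*=x_j^*\Rightarrow w_{ij}\delta_{ij}=0$}. I would prove this by perturbing one coordinate, $x_i\mapsto x_i^*+h$: the resulting one‑variable function has a local minimum at $h=0$, and matching its one‑sided derivatives (the smooth contributions cancel) forces $\sum_{k:\,x_k^*=x_i^*}w_{ik}\delta_{ik}\le0$, hence $=0$. It follows that if $\gamma^*$ is the smallest \emph{positive} value among the $|x_i^*-x_j^*|$, then on $U^*=\{\|\vec y-\vec x^*\|_\infty<\gamma^*/4\}$ and for $\varepsilon<\gamma^*/2$ each pair either has $|y_i-y_j|>\varepsilon$ (so $g_\varepsilon(y_i-y_j)=|y_i-y_j|$) or has $\delta_{ij}=0$ (its $g_\varepsilon$‑term absent); therefore $S_\varepsilon=S+\tfrac{\varepsilon^2}{6}\sum_{i<j}w_{ij}$ \emph{identically on $U^*$}. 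When $\arg\min_H S=\{\vec x^*\}$: on $\overline{U^*}$ both $S$ and $S_\varepsilon$ are uniquely minimized at $\vec x^*$, while for $\vec x\in H\setminus U^*$ the estimate $S_\varepsilon\ge S-\tfrac{2\varepsilon}{3}\sum_{i<j}w_{ij}\delta_{ij}+\tfrac{\varepsilon^2}{6}\sum_{i<j}w_{ij}$ together with $\mu:=\inf_{H\setminus U^*}S-S(\vec x^*)>0$ (positive by coercivity and uniqueness) gives $S_\varepsilon(\vec x)-S_\varepsilon(\vec x^*)\ge\mu-\tfrac{2\varepsilon}{3}\sum_{i<j}w_{ij}\delta_{ij}>0$ for small $\varepsilon$; hence every $S_\varepsilon$‑minimizer lies in $U^*$, so $S_\varepsilon$ has a unique global minimizer, equal to $\vec x^*$, settling uniqueness and (trivially) $\vec x^\varepsilon\to\vec x^*$. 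For the ``otherwise'' case the same local identity holds near \emph{every} $\vec x^*\in\arg\min_H S$; combined with the compactness step (each $S_\varepsilon$‑minimizer eventually sits in one such neighbourhood) it yields $\min_H S_\varepsilon=\min_H S+\tfrac{\varepsilon^2}{6}\sum_{i<j}w_{ij}$, so each $\vec x^*\in\arg\min_H S$ is itself an $S_\varepsilon$‑minimizer for small $\varepsilon$, whence $\inf_{\vec x^\varepsilon\in X^\varepsilon}\|\vec x^\varepsilon-\vec x^*\|\to0$.

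The main obstacle is exactly the key lemma and the local identity it produces. Without the observation that $S$ can fail to be differentiable at a minimizer only across pairs of zero target dissimilarity, one faces the genuinely harder analysis of $S_\varepsilon$ near a non‑smooth minimizer of $S$: there $S_\varepsilon$ does \emph{not} reduce to $S$ plus a constant, and since $g_\varepsilon''(0)=2/\varepsilon$ the Hessian of $S_\varepsilon$ picks up the term $(2w_{ij}-4w_{ij}\delta_{ij}/\varepsilon)$ along $x_i-x_j$, which can be negative — so $\vec x^*$ need not even remain a local minimizer of $S_\varepsilon$, and its minimizers could split into a cluster collapsing onto $\vec x^*$. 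Everything else (translation invariance, graph connectivity, coercivity, the compactness argument, the one‑sided‑derivative computation) is routine.
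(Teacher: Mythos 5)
Your proof is correct, and it goes considerably beyond the argument given in the paper, which is essentially a three-line sketch: item 1 is asserted from continuity and boundedness of \(S\) on bounded sets, and both halves of item 2 are declared to follow from uniform convergence together with continuity (unique case) or ``compactness of the set of global minima'' (general case). Your treatment of item 1 matches the paper's in spirit but is sharper --- the closed-form bound \(|S_\varepsilon-S|\le\tfrac{2\varepsilon}{3}\sum_{i<j}w_{ij}\delta_{ij}+\tfrac{\varepsilon^2}{6}\sum_{i<j}w_{ij}\) gives uniformity on all of \(\mathbb{R}^N\), not just on bounded sets. For item 2 you supply three ingredients the paper omits entirely: the reduction modulo translations (without which \(S\) never has a unique minimum and the hypothesis is vacuous), coercivity on the hyperplane \(\sum_n x_n=0\) (without which the claim that minimizers of \(S_\varepsilon\) stay in a fixed compact set is unjustified), and --- most substantively --- the lemma that at any local minimizer of \(S\) every coincident pair has \(w_{ij}\delta_{ij}=0\), which yields the local identity \(S_\varepsilon=S+\tfrac{\varepsilon^2}{6}\sum_{i<j}w_{ij}\) near each minimizer. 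This last point is not cosmetic: uniform convergence alone gives only that \(S_\varepsilon\)-minimizers accumulate on \(\arg\min S\); it cannot deliver the theorem's claim that \(S_\varepsilon\) has a \emph{unique} global minimum for small \(\varepsilon\), nor the ``otherwise'' clause that minimizers of \(S_\varepsilon\) come close to \emph{every} minimizer of \(S\). Your lemma is exactly what closes that gap (and, as a bonus, shows \(\vec{x}^\varepsilon=\vec{x}^*\) exactly for small \(\varepsilon\) in the unique case). The one caveat worth recording is your standing assumption that the graph of pairs with positive weight is connected; it is genuinely needed for coercivity and should be stated alongside the theorem.
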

\begin{proof}  
\textbf{Uniform Convergence of \( S_\varepsilon(\vec{x}) \): }
   For any bounded subset \( B \subset \mathbb{R}^N \), the smoothed function \( S_\varepsilon(\vec{x}) \) is defined as:  
   \[
   S_\varepsilon(\vec{x}) = \frac{1}{\varepsilon^N} \int_{D(\vec{x}, \varepsilon)} S(\vec{y}) \, d\vec{y},
   \]  
   where \( D(\vec{x}, \varepsilon) \) is a cube centered at \( \vec{x} \) with side length \( \varepsilon \). As \( \varepsilon \rightarrow 0 \), the integral converges to \( S(\vec{x}) \) uniformly on \( B \) because \( S(\vec{x}) \) is continuous and bounded on \( B \).  

\textbf{Convergence of Global Minima in Case of Unique Global Minimum:} If \( S(\vec{x}) \) has a unique global minimum \( \vec{x}^* \), then for any sequence \( \varepsilon_k \rightarrow 0 \), the corresponding global minima \( \vec{x}^{\varepsilon_k} \) of \( S_{\varepsilon_k}(\vec{x}) \) converge to \( \vec{x}^* \). This follows from the uniform convergence of \( S_\varepsilon(\vec{x}) \) and the continuity of \( S(\vec{x}) \).  
    
\item \textbf{Convergence of Global Minima in Case of Multiple Global Minima:} If \( S(\vec{x}) \) has multiple global minima, then for any global minimum \( \vec{x}^* \) of \( S(\vec{x}) \), the distance between \( \vec{x}^* \) and the set of global minima \( X^\varepsilon \) of \( S_\varepsilon(\vec{x}) \) approaches zero as \( \varepsilon \rightarrow 0 \). This is a consequence of the uniform convergence of \( S_\varepsilon(\vec{x}) \) and the compactness of the set of global minima.
\end{proof}  

Given this, the whole minimization procedure can be outlined as follows:
\begin{itemize}  
  \item Build a sequence \( \{\varepsilon_i\}_{i=1}^I \), such that \( \varepsilon_1 > \varepsilon_2 > \ldots > \varepsilon_I > 0 \). One may select \( \varepsilon_1 = 2d^* \) and \( \varepsilon_i = \varepsilon_1 (I - i + 1) / I \) for \( i = 2, \ldots, I \), where \( d^* = \max\limits_{1 \leq i \leq I} \left( \sum\limits_{j=1}^I d_{ij} \Big/ \sum\limits_{j=1}^I w_{ij} \right) \). 
  \item Minimize \( S_{\varepsilon_1}(\vec{x}) \) using the iterative algorithm, and take its minimization result as the initial configuration for minimizing \( S_{\varepsilon_2}(\vec{x}) \). Repeat this process for all \( \varepsilon_i \) in the sequence.  
  % \item Take the result of the minimization of \( S_{\varepsilon_I}(\vec{x}) \) as the initial approximation for the local minimization of \( S(\vec{x}) \).  
\end{itemize}  

This procedure ensures that the solution converges to a global minimum by gradually refining the configuration from a smoothed version of the stress function to the actual stress function.

\section{DIT Rarity Meter}  
\label{section:DITMethodology}  

The proposed algorithm, Dissimilarities in Trades (DIT), is designed to construct an optimal NFT rarity meter by leveraging pairwise dissimilarities derived from NFT trade data. The methodology builds on the theoretical framework of NWUDS and introduces a systematic approach to minimize the stress function $S(\vec{x})$, ensuring alignment with market dynamics.

To evaluate the performance of the DIT algorithm, we split the dataset of trades into training and test sets \cite{Bishop2006}. The test set is used to assess the generalizability of the rarity scores to previously unseen tokens. The performance is measured using $F$ on the test set, ensuring that the rarity scores remain consistent with market dynamics even for out-of-sample tokens.  

\subsection{DIT Training}  
The DIT training algorithm is outlined in Algorithm \ref{alg:OptimalRarityMeter}.  

\begin{algorithm}[h]  
\caption{DIT Training}  
\label{alg:OptimalRarityMeter}  
\begin{algorithmic}[1]  
\REQUIRE Dissimilarity matrix \( (\delta_{ij})_{i,j=1}^N \), weights matrix \( (w_{ij})_{i,j=1}^N \), and initial configuration \( \vec{x}^0 \).  
\ENSURE Optimal rarity scores \( \vec{R} = (R_1, \dots, R_N)^T \).  
\STATE Initialize \( \varepsilon_1 = 2d^* \), where $$d^* = \max\limits_{1 \leq i \leq N} \left( \sum\limits_{j=1}^N d_{ij} \Big/ \sum\limits_{j=1}^N w_{ij} \right).$$  
\STATE Define a sequence \( \{\varepsilon_q\}_{q=1}^Q \) such that \( \varepsilon_q = \varepsilon_1 (Q - q + 1) / Q \) for \( q = 2, \ldots, Q \).  
\FOR{\( q = 1 \) to \( Q \)}  
  \STATE Minimize \( S_{\varepsilon_q}(\vec{x}) \) using the iterative update:  
  \STATE \( x_n^{q+1} = \sum\limits_{j = 1}^N w_{nj} \left( x_j^q - d_{nj} u_\varepsilon (x_n^q - x_j^q) \right) \Big/ \sum\limits_{j = 1}^N w_{nj} \).  
  \STATE Use the result as the initial configuration for the next \( \varepsilon_{q+1} \).  
\ENDFOR  
\STATE Perform a local minimization of \( S(\vec{x}) \) using the final configuration from the previous step.  
\STATE Shift the solution so that \( \min_{n=1, \dots, N} R_n = 1 \).  
\STATE Choose the sign of the solution to maximize the total rarity-weighted log-price.  
\STATE Return the optimal rarity scores \( \vec{R} \).  
\end{algorithmic}  
\end{algorithm} 

\subsection{DIT Test}
\label{subsec:DITTest}
To address the challenges of overfitting and out-of-sample extension (OoS) \cite{Bengio2003}, we apply a nonparametric regression approach for propagating rarity scores to new tokens \cite{Yanovich2017a}. The smoothed rarity score $R_k(X)$ for an NFT $X$ is computed as:  
$$  
R_k(X) = \frac{\sum_{n=1}^N R(X_n) \cdot K\left(|\frac{\tilde{R}(X_n) - \tilde{R}(X)|}{\varepsilon_k(X; \mathbb{X}_N)}\right)}{\sum_{n=1}^N K\left(\frac{|\tilde{R}(X_n) - \tilde{R}(X)|}{\varepsilon_k(X; \mathbb{X}_N)}\right)},  
$$  
where:  
\begin{itemize}  
  \item $\tilde{R}$ is an interpretable rarity meter based on NFT features/traits (like Rarity.tools or KRAMER),  
  $K(a)$ is the exponential kernel:  
  $K(a) = \exp(-a).$  
  \item $\varepsilon_k(X; \mathbb{X}_N)$ is the distance from $X$ to its $k$-th nearest neighbor in $\mathbb{X}_N$.  
\end{itemize}  

The interpretable rarity meter $\tilde{R}$ and the parameter $k$ are selected via grid search during cross-validation on the training set. This ensures the robustness of the rarity scores and their generalizability to new tokens.   

By integrating dimension reduction, smoothing, and interpretability, DIT provides a principled and practical approach to NFT rarity assessment, bridging the gap between theoretical rarity metrics and real-world market applications.

\section{Numerical Experiments}  
\label{sec:Experiments}  

In this section, we present the results of our numerical experiments using the updated ROAR benchmark dataset, which includes trade data up to November 18, 2024 \cite{Belousov2024}. We evaluate the performance of the proposed DIT algorithm, an non-interpretable rarity meter, in comparison to existing interpretable rarity meters. Additionally, we analyze the impact of key parameters on the performance of the DIT algorithm. The dataset and experimental scripts used in this study are openly accessible on GitHub \cite{fasghq2025}.

\subsection{Dataset Overview}  
The ROAR benchmark dataset comprises $100$ NFT collections in Ethereum blockchain top ranked by trade volume, with $4.1$ millions of total trades. Each collection includes detailed trait and trade information, enabling a comprehensive evaluation of rarity meters.  

\subsection{Performance Measure}  
We adopt the NWUDS-inspired DIT's $F$ as the performance measure, replacing the previously employed weighted correlation $F_{wc}$. This modification is motivated by its computational efficiency and its natural alignment with the dimension reduction framework. The measure assesses the similarity between pairwise rarities and deal prices, offering a robust basis for comparative analysis. All rarity meters were scale-adjusted using $\alpha$ from Statement~\ref{statement:scale} to optimize their performance.

\subsection{Comparison with State-of-the-Art}  
We compare the performance of DIT against five state-of-the-art rarity meters: Rarity.tools, KRAMER, OpenRarity, NFTGO, and ROAR. Performance profiles are used to visualize the results across the benchmark dataset.  

\begin{figure}[h!]  
    \includegraphics[width=0.99\linewidth]{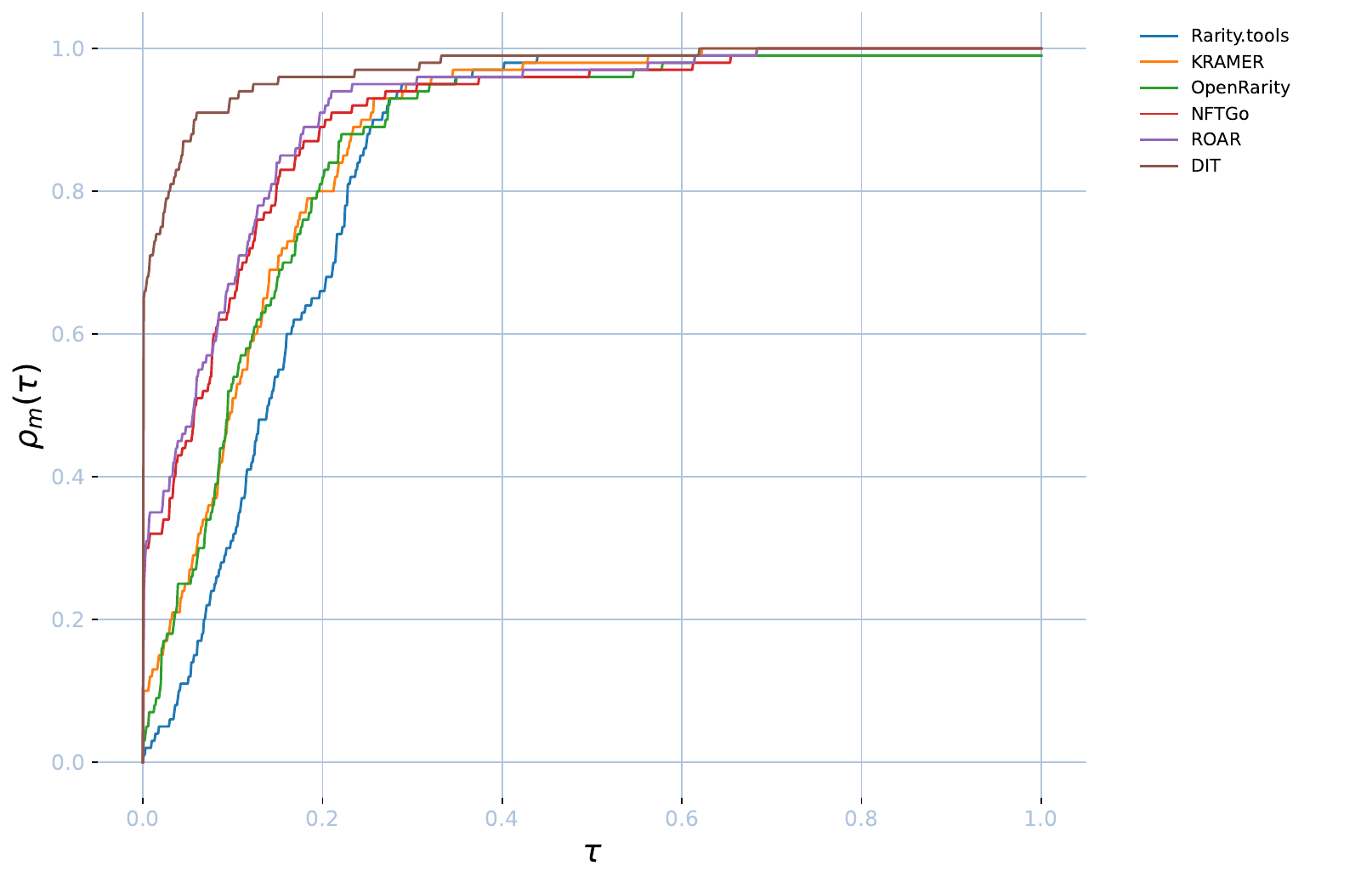}  
    \centering  
    \caption{Performance profiles of rarity meters on the updated ROAR benchmark dataset.}  
    \label{fig:PerformanceProfiles}  
\end{figure}  

Figure~\ref{fig:PerformanceProfiles} shows that DIT consistently outperforms the other methods, achieving the best performance in over $65\%$ of the collections while maintaining a competitive edge in the remaining cases, albeit at the cost of interpretability. ROAR and NFTGo follow in second and third place, respectively. 

Notably, in the most of the cases NFTGo was the best choice for interpretable rarity meter $\tilde{R}$ during DIT test (see subsection \ref{subsec:DITTest}). The optimal $k$ was different from collection to collection, so we will ellaborate more on it in the next subsection.

\subsection{Impact of Neighborhood Parameter}
We investigate the effect of the neighborhood parameter $k$ on the performance of DIT. This parameter controls the number of neighbors used in the nonparametric regression for out-of-sample extension.  

\begin{figure}[h!]  
    \includegraphics[width=0.95\linewidth]{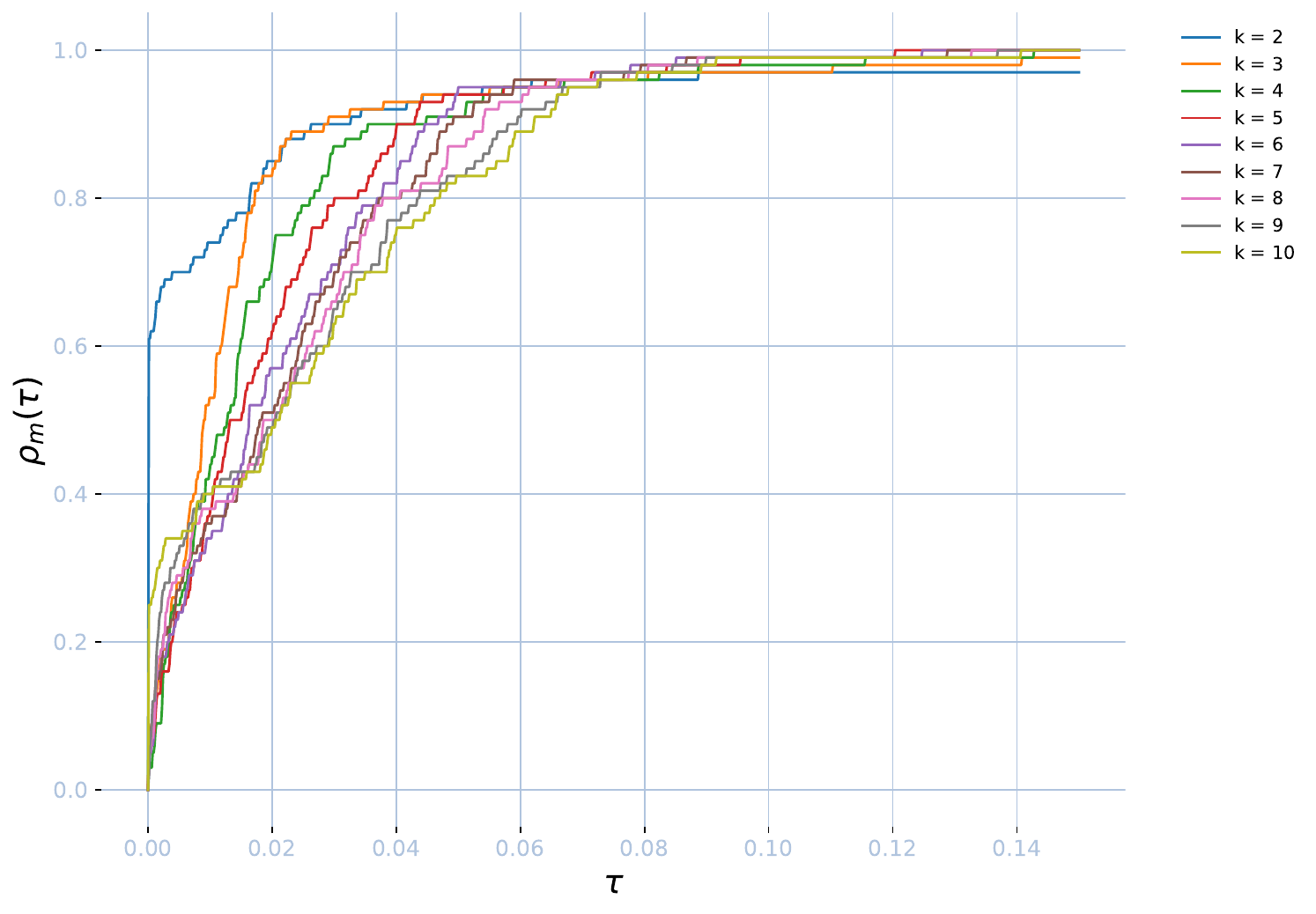}   
    \centering  
    \caption{Impact of the neighborhood parameter $k$ on the performance of DIT.}  
    \label{fig:kParameterImpact}  
\end{figure}  

Figure~\ref{fig:kParameterImpact} demonstrates that the optimal $k$ was $2$ in $60\%$, while the others from $1$ to $25$ had fractions from $1\%$ to $4\%$.  

\section{Discussion and Implications}  
\label{sec:Discussion}  

The proposed framework for NFT rarity meters offers several key advantages for blockchain-based information systems and business process management. First, it provides a rigorous mathematical foundation for assessing rarity, grounded in the principles of dimension reduction. This approach enhances the reliability and transparency of NFT valuation, which is critical for building trust in decentralized systems. Second, the framework accounts for the inherent volatility of NFT markets by focusing on pairs of deals that are close in time, thereby mitigating the impact of price fluctuations. This is particularly relevant for blockchain-based systems, where real-time data integrity and accuracy are paramount. Third, the iterative algorithm ensures computational efficiency and convergence of the DIT rarity meter to a globally optimal solution, making it scalable for large-scale applications.

The implications of this work extend beyond NFTs. The framework can be adapted to other domains where rarity or dissimilarity measures are critical, such as digital art, collectibles, and financial assets. Future research could explore the development of real-time rarity assessment tools and applications in anti-money laundering, leveraging blockchain’s immutability and transparency. Additionally, the framework could be integrated into blockchain-based business process management systems to enhance decision-making and optimize resource allocation.

The experimental results highlight the superiority of DIT in aligning NFT rarity scores with market dynamics. The algorithm’s ability to leverage pairwise dissimilarities in trades, combined with efficient out-of-sample extension, makes it a practical and effective tool for NFT rarity assessment. By adopting unidimensional scaling as the performance measure, we ensure a computationally efficient and theoretically motivated evaluation framework.

These findings underscore the importance of integrating market data and advanced optimization techniques in designing NFT rarity meters, paving the way for future research in blockchain-based information systems and business process management.

\section{Conclusion}  
\label{sec:Conclusion}  

In this paper, we have introduced a novel dimension reduction perspective for designing optimal NFT rarity meters, grounded in the principles of non-metric weighted unidimensional scaling. Our framework bridges the gap between theoretical rarity assessment and practical market applications, ensuring that the resulting rarity scores are robustly aligned with real-world market dynamics. The proposed DIT algorithm demonstrates superior performance compared to existing methods, albeit at the cost of reduced interpretability. Despite this limitation, the DIT algorithm can still be effectively integrated into automated tools for NFT traders and utilized by law enforcement to detect market manipulation such as pump-and-dump schemes and wash trading \cite{Morgia2023}.

Our work underscores the critical importance of incorporating market data into the design of rarity meters and establishes a solid foundation for future research in blockchain-based information systems. As the NFT market continues to evolve, the ability to accurately assess and quantify rarity will remain a pivotal challenge. The methods presented here represent a significant step forward, with potential applications in anti-money laundering (AML), real-time rarity assessment, and blockchain-based business process management (BPM).

Future research directions could focus on extending this framework by integrating additional market features, developing real-time assessment tools, and addressing scalability and interoperability challenges inherent in blockchain-based systems. By advancing the understanding of NFT rarity and its applications, this work contributes to the broader objective of building trustworthy, efficient, and transparent blockchain-based information systems.

\section*{Acknowledgment}

The authors are thankful to Maksim Shuklin, who participated in the early research stage in 2023/2024 academic year.

We acknowledge the use of ChatGPT and DeepSeek in enhancing the readability and clarity of this manuscript. These tools were employed to assist in refining language and improving the overall presentation of the content. However, the authors retain full responsibility for the integrity, accuracy, and intellectual contributions of the research at all stages.

\bibliographystyle{elsarticle-num}
\bibliography{refs_2025}

% \begin{thebibliography}{00}

% %% \bibitem{label}
% %% Text of bibliographic item

% \bibitem{}

% \end{thebibliography}

\end{document}